\documentclass[proceedings]{stacs}
\stacsheading{2009}{51--62}{Freiburg}
\firstpageno{51}

\usepackage[latin1]{inputenc}
\usepackage[T1]{fontenc}
\usepackage{pstricks,pst-node,pstricks-add}
\usepackage{enumerate}

\usepackage{stmaryrd,mathrsfs}
\usepackage{amsmath,amssymb,latexsym}
\usepackage{times}

\theoremstyle{plain}
\theoremstyle{definition}


\newcommand{\ov}[1]{\overline{#1}}
\newcommand{\vek}[1]{\vec{#1}}

\renewcommand*{\geq}{\ensuremath{\geqslant}}
\renewcommand*{\leq}{\ensuremath{\leqslant}}

\newcommand{\twodots}{.\,.\,}

\newcommand{\raus}[1]{}

\newcommand{\set}[1]{\ensuremath{\{ #1 \}}}
\newcommand{\setc}[2]{\set{ #1 : #2}}

\newcommand*{\A}{\ensuremath{\mathcal{A}}}

\newenvironment{mi}{\begin{itemize}}{\end{itemize}}

\newenvironment{mea}{\begin{enumerate}[\mbox{\quad }(a)]}{\end{enumerate}}

\newcommand{\myparagraph}[1]{\subsubsection*{{\bf #1}}}


\newcommand{\DI}{\ensuremath{\mathbb{D}}}
\newcommand{\Disj}{\ensuremath{\textit{Disj}}}

\newcommand{\aut}{\text{mp2s-automaton}}
\newcommand{\auta}{\text{mp2s-automata}}

\newcommand{\kf}{\ensuremath{k_{\scriptscriptstyle f}}}
\newcommand{\kb}{\ensuremath{k_{\scriptscriptstyle b}}}

\newcommand{\adv}{\ensuremath{\textsl{advance}}}
\newcommand{\stay}{\ensuremath{\textsl{stay}}}

\newcommand{\eos}{\ensuremath{\textsl{end}}}

\newcommand{\vv}{\ensuremath{v}}

\newcommand{\config}{\ensuremath{\textit{config}}}

\begin{document}
\title[Lower Bounds for Multi-Pass Processing of Multiple Data Streams]
{Lower Bounds for Multi-Pass Processing of \\ Multiple Data Streams}

\author{Nicole Schweikardt}{Nicole Schweikardt}
\address{Institut f\"ur Informatik, 
  Goethe-Universit\"at Frankfurt am Main, \newline
  Robert-Mayer-Str.\ 11--15, D-60325 Frankfurt am Main, 
  Germany}  
\email{schweika@informatik.uni-frankfurt.de}  
\urladdr{http://www.informatik.uni-frankfurt.de/\~{}schweika}  


\keywords{data streams, lower bounds, machine models, automata, the set disjointness problem}
\subjclass{
 F.1.1 (Computation by Abstract Devices: Models of Computation); \\
 F.2.2 (Analysis of Algorithms and Problem Complexity: Nonnumerical Algorithms and Problems); \\
 F.2.3 (Analysis of Algorithms and Problem Complexity: Tradeoffs between Complexity Measures)
}
\vspace{-1ex}



\begin{abstract}
  \noindent 
  This paper gives a brief overview of computation models for data stream processing, and
  it introduces a new model for multi-pass processing of
  multiple streams, the so-called \emph{\auta{}}. Two algorithms for solving the 
  set disjointness problem with these automata are presented.
  The main technical contribution of this paper is the proof of a 
  lower bound on the size of memory and the number of heads that are
  required for solving the set disjointness problem with \auta{}.
\end{abstract}

\maketitle


\section{Introduction}\label{section:introduction}

In the basic data stream model, the input consists of a stream of data items 
which can be read only sequentially, one after the other.
For processing these data items, a memory buffer of limited size is available.
When designing data stream algorithms, one aims at 
algorithms whose memory size is far smaller than the size of the input.

Typical application areas for which
data stream processing is relevant are, e.g., 
IP network traffic analysis, mining text message streams, or processing
meteorological data generated by sensor networks.
Data stream algorithms are also used to support query optimization
in relational database systems.
In fact, virtually all query optimization methods in 
relational database systems rely on information about
the number of distinct values of an attribute or the self-join size of a relation 
--- and these pieces of information have to be maintained while the database is updated. 
Data stream algorithms for accomplishing this task have been introduced in the seminal
paper \cite{alomatsze99}.

Most parts of the data stream literature deal with the task of performing
\textbf{one pass over a single stream}. For a detailed overview
on algorithmic techniques for this scenario we refer to \cite{Muthukrishnan-DataStream-Algos}.
 \emph{Lower bounds} on the size of  memory needed for solving a problem by a 
one-pass algorithm are usually obtained by applying methods from 
\emph{communication complexity} 
(see, e.g., \cite{alomatsze99,henragraj99}).
In fact, for many concrete problems it is known that the memory needed 
for solving the problem by a deterministic one-pass algorithm is at 
least linear in the size $n$ of the input.
For some of these problems, however,
\emph{randomized} one-pass algorithms can still compute good \emph{approximate} answers 
while using memory of size sublinear in $n$.
Typically, such algorithms are based on \emph{sampling}, 
i.e., only a ``representative'' portion of the data is taken into account, 
and \emph{random projections}, i.e., only a rough ``sketch'' of the data is stored in memory.
See \cite{Muthukrishnan-DataStream-Algos,ChristianSohler-Survey} for a comprehensive survey 
of according algorithmic techniques and for pointers to the literature.

Also the generalization where \textbf{multiple passes over a single stream} are performed, 
has received considerable attention in the literature. 
Techniques for proving lower bounds in this scenario can be found, 
e.g., in 
\cite{henragraj99,GuhaMcGregor-ICALP08,DBLP:journals/dcg/ChanC07,DBLP:conf/focs/GalG07,munpat80}.

A few articles also deal with the task of \textbf{processing several streams in parallel}.
For example, the authors of \cite{Shalem-BarYossef-icde2008} consider 
algorithms which perform one pass over several streams.
They introduce a new model of multi-party communication complexity that is suitable for
proving lower bounds 
on the amount of memory necessary for one-pass algorithms on multiple
streams.
In \cite{Shalem-BarYossef-icde2008}, these results are used for determining 
the exact space complexity of processing particular XML twig queries.
\\
In recent years, the database community has also addressed the issue of 
designing general-purpose \emph{data stream management systems}
and query languages that are suitable for new application areas where multiple data streams
have to be processed in parallel. To get an overview of this research area, 
\cite{bbdmw02} is a good starting point. Foundations for a theory of \emph{stream queries}
have been laid in \cite{DBLP:conf/dbpl/GurevichLB07}.
Stream-based approaches have also
been examined in detail in connection with \emph{XML query processing and validation},
see, e.g.\ the papers \cite{SegoufinVianu_PODS02,SegoufinSirangelo_ICDT07,SuciuEtAl_TODS04,ChanFelberEtAl-VLDBJ2002,barfonjos04,barfonjos05,grokocschwe05a}.

The \emph{finite cursor machines} (FCMs, for short) of 
\cite{FCM-paper} are a computation model for performing
\textbf{multiple passes over multiple streams}.
FCMs were introduced as an abstract model of database query processing. 
Formally, they are defined in the framework of 
\emph{abstract state machines}.
Informally, they can be described as follows:
The input for an FCM is a relational database, each relation of which is 
represented by a \emph{table}, i.e., an ordered list of rows, where each 
row corresponds to a tuple in the relation.
Data elements are viewed as
``indivisible'' objects that can be manipulated by a number of ``built-in'' operations. 
This feature is very convenient to model standard operations on data types 
like integers, floating point numbers,
or strings, which may all be part of the universe of data elements.
FCMs can operate in a finite number of \emph{modes} using an 
\emph{internal memory} in which they can
store bitstrings. They access each relation through a finite number of 
\emph{cursors}, each of which can read one row of a table at any time.
The model incorporates certain \emph{streaming} or \emph{sequential processing} 
aspects by imposing a restriction on the movement of the cursors: 
They can move on the tables only sequentially in one direction. 
Thus, once the last cursor has left a row of a table, 
this row can never be accessed again during the computation.
Note, however, that several cursors can be moved asynchronously over the same table 
at the same time, and thus, entries in different, possibly far apart, 
regions of the table can be read and processed simultaneously.

A common feature of the computation models mentioned so far in this paper is that the
input streams are \emph{read-only} streams that cannot be modified during a
pass. Recently, also
\textbf{stream-based models for external memory processing}
have been proposed, among them 
the \emph{StrSort model} \cite{aggdatrajruh04,Ruhl-PhD}, 
the \emph{W-Stream} model \cite{WStream-Paper}, and the
model of \emph{read/write streams} 
\cite{groschwe05a,grokocschwe05a,groherschwe06,BJR-STOC07,BH-FOCS08}.
In these models, several passes may be performed over a single stream or over
several streams in parallel, and during a pass, the content of the stream may be
modified.
\\
\par

A detailed introduction to \emph{algorithms on data streams}, respectively, to the 
related area of \emph{sublinear algorithms} can be found in
\cite{Muthukrishnan-DataStream-Algos,ChristianSohler-Survey}.
A survey of \emph{stream-based models for external memory processing} and of 
methods for proving 
\emph{lower bounds} in these models is given in \cite{Schweikardt-PODS07survey}.
A database systems oriented overview of so-called \emph{data stream systems} can
be found in \cite{bbdmw02}.
For a list of \emph{open problems} in the area of data streams we refer to
\cite{Kanpur-2006-ListOfOpenProblems}.
\\
\par

In the remainder of this article, a new computation model for multi-pass processing
of multiple streams is introduced: the \emph{\auta{}}. 
In this model,
(read-only) streams can be processed by forward scans as well as 
backward scans, and several ``heads'' can be used to perform several passes 
over the streams in parallel. 
After fixing the basic notation in Section~\ref{section:preliminaries}, the computation
model of \auta{} is introduced in Section~\ref{section:mpms-automata}.
In Section~\ref{section:main-result}, we consider the \emph{set disjointness problem} 
and prove upper bounds as well as lower bounds
on the size of memory and the number of heads that are necessary for solving
this problem with an \aut{}.
Section~\ref{section:conclusion} concludes the paper by pointing out some directions
for future research.

\section{Basic notation}\label{section:preliminaries}

If $f$ is a function from the set of non-negative integers to the set
of reals, we shortly write $f(n)$ instead of $\lceil f(n)\rceil$ (where
$\lceil x \rceil$ denotes the smallest integer $\geq x$).
We write $\lg n$ to denote the logarithm of $n$ with respect to base 2.
For a set $\DI$ we write $\DI^*$ to denote the set of all 
finite strings over alphabet $\DI$. We view $\DI^*$  as 
the set of all finite \emph{data streams} that can be built from 
elements in $\DI$.
For a stream $\vek{S}\in\DI^*$ write $|\vek{S}|$ to denote the
length of $\vek{S}$, and we write $s_i$ to denote the element 
in $\DI$ that occurs at the $i$-th position in $\vek{S}$, i.e., 
$\vek{S}= s_1 s_2\cdots s_{|\vek{S}|}$.

\section{A computation model for multi-pass processing of multiple streams}
\label{section:mpms-automata}

In this section, we fix a computation model for multi-pass processing
of multiple streams. 
The model is quite powerful: 
Streams can be processed by forward scans as well as backward scans, and
several ``heads'' can be used to perform several passes over the stream 
in parallel.
For simplicity, we restrict attention to the case
where just \emph{two} streams are processed in parallel.
Note, however, that it is straightforward to generalize the model to an
arbitrary number of streams.

The computation model, called
\emph{\auta}\footnote{\emph{``mp2s''} stands for 
\underline{m}ulti-\underline{p}ass processing of \underline{2} 
\underline{s}treams}, can be described as follows:
Let $\DI$ be a set, and let $m,\kf,\kb$ be integers with 
$m\geq 1$ and $\kf,\kb\geq 0$.
An \vspace{1ex} 
\begin{center}
  \emph{\aut{} $\A$ with parameters $(\DI,m,\kf,\kb)$}  \vspace{1ex}
\end{center}
receives as input two streams $\vek{S}\in\DI^*$ and $\vek{T}\in\DI^*$.
The automaton's memory consists of $m$ different states
(note that this corresponds to a memory buffer consisting of 
$\lg m$ bits). The automaton's state space is denoted by $Q$.
We assume that $Q$ contains a designated \emph{start state} and
that there is a designated subset $F$ of $Q$ of 
so-called \emph{accepting states}.

On each of the input streams $\vek{S}$ and $\vek{T}$, the automaton
has $\kf$ heads that process the stream from left to right 
(so-called \emph{forward heads}) and $\kb$ heads that process the stream 
from right to left (so-called \emph{backward heads}).
The heads are allowed to move asynchronously.
We use $k$ to denote the total number of heads, i.e., $k= 2\kf + 2\kb$.

\begin{figure}[h!tbp]
 \begin{center}
  \fbox{$
   \begin{array}{cll}
     \DI & : & 
      \text{set of \emph{data items} of which input 
         streams $\vek{S}$ and $\vek{T}$
         are composed}
    \\[1ex]
      m  & : & 
      \text{size of the automaton's \emph{state space} $Q$
         \ (this corresponds to $\lg m$ bits of memory)}
    \\[1ex]
      \kf & : & 
      \text{number of \emph{forward heads} available on each input stream}
    \\[1ex]
      \kb & : & 
       \text{number of \emph{backward heads} available on each input stream}
    \\[1ex]
      k   & : & 2\kf + 2\kb \quad \text{(total number of heads)}
   \end{array}
  $}
   \caption{The meaning of the parameters $(\DI,m,\kf,\kb)$ of an \aut.}
 \end{center}
\end{figure}

In the \emph{initial configuration} of $\A$ on input $(\vek{S},\vek{T})$,
the automaton is in the \emph{start state},
all \emph{forward} heads on $\vek{S}$ and $\vek{T}$ are placed on the leftmost
element in the stream, i.e., $s_1$ resp.\ $t_1$, and all
\emph{backward} heads are placed on the rightmost element in the stream, 
i.e., $s_{|\vek{S}|}$ resp.\ $t_{|\vek{T}|}$. 

During each computation step, depending on
(a) the current state (i.e., the current content of the automaton's
memory) and (b) the elements of $\vek{S}$ and $\vek{T}$ at the 
current head positions,
a deterministic transition function determines
(1) the next state (i.e., the new content of the automaton's memory) 
and (2) which of the $k$ heads should be advanced to the 
next position (where forward heads are advanced one step to the right, and
backward heads are advanced one step to the left). Formally, the 
transition function can be specified in a straightforward way by a function 
\[
   \delta\ : \ Q \times (\DI\cup \set{\eos})^k \ \longrightarrow \ Q \times \set{\adv, \stay}^k
\] 
where $Q$ denotes the automaton's state space, and $\eos$ is a special
symbol (not belonging to $\DI$) which indicates that a head has 
reached the end of the stream (for a forward head this means that the
head has been advanced beyond the rightmost element of the stream, and for
a backward head this means that the head has been advanced beyond the
leftmost element of the stream).

The automaton's computation on input $(\vek{S},\vek{T})$ ends as soon as
each head has passed the entire stream. The input is \emph{accepted} if the
automaton's state then belongs to the set $F$ of accepting states, and
it is \emph{rejected} otherwise.
\\
\par

The computation model of \auta{} is closely related to the \emph{finite 
cursor machines} of \cite{FCM-paper}. In both models, several streams can
be processed in parallel, and several heads (or, ``cursors'')
may be used to perform several ``asynchronous'' passes over the same stream in 
parallel. In contrast to the \auta{} of the present paper, finite cursor machines
were introduced as an abstract model for database query processing, and their
formal definition in \cite{FCM-paper} is presented in the framework of 
\emph{abstract state machines}.

Note that \auta{} can be viewed as a generalization of other models for 
one-pass or multi-pass processing of streams.
For example, the scenario of \cite{Shalem-BarYossef-icde2008}, where a single
pass over two streams is performed, is captured by an \aut{} where
1~forward head and no backward heads are available on each stream. 
Also, the scenario where $p$ consecutive passes
of each input stream are available (cf., e.g., \cite{henragraj99}), can 
be implemented by an
\aut{}: just use $p$ forward heads and $0$ backward heads, and let
the $i$-th head wait at the first position of the stream until the
$(i{-}1)$-th head has reached the end of the stream.

\section{The set disjointness problem}\label{section:main-result}

Throughout Section~\ref{section:main-result} we consider a particular
version of the \emph{set disjointness problem} where, 
for each integer $n\geq 1$, 
\ \(
   \DI_n \ := \ \set{\,a_1,b_1,\ \ldots, \, a_n,b_n \,}
\) \
is a fixed set of $2n$ data items.
We write $\Disj_n$ to denote the following decision problem:
The input consists of two streams $\vek{S}$ and $\vek{T}$ over $\DI_n$
with $|\vek{S}| = |\vek{T}| = n$.
The goal is to decide whether the sets $\set{s_1,\ldots, s_n}$ and 
$\set{t_1,\ldots,t_{n}}$ are disjoint.

An \aut{} \emph{solves} the problem $\Disj_n$ if, for all valid
inputs to $\Disj_n$ (i.e., all $\vek{S},\vek{T}\in \DI^*$ with $|\vek{S}| = |\vek{T}| = n$),
it accepts the input if, and only if, the corresponding
sets are disjoint.

\subsection{Two upper bounds for the set disjointness problem}\label{subsec:upperbounds}

It is straightforward to see that the problem $\Disj_n$ can be solved
by an \aut{} with $2^{2n}$~states and a single forward head on each of the
two input streams: During a first phase, the
head on $\vek{S}$ processes $\vek{S}$ and stores, in the automaton's
current state, the subset of $\DI_n$ that has been seen while processing $\vek{S}$.
Afterwards, the head on $\vek{T}$ processes $\vek{T}$ and checks whether
the element currently seen by this head belongs to the subset of $\DI_n$ that
is stored in the automaton's state. 
Clearly, $2^{2n}$ states suffice for this task, since $|\DI_n|=2n$. We thus
obtain the following trivial upper bound:

\begin{proposition}\label{prop:upper-bound-trivial}
 $\Disj_n$ can be solved by an \aut{} with parameters
 $(\DI_n,2^{2n},1,0)$.
\end{proposition}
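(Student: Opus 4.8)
The plan is to construct an explicit \aut{} with parameters $(\DI_n, 2^{2n}, 1, 0)$ that solves $\Disj_n$, following the two-phase strategy sketched informally in the paragraph preceding the statement. Since $\kf = 1$ and $\kb = 0$, the automaton has exactly one forward head on $\vek{S}$ and one forward head on $\vek{T}$, and no backward heads; the total head count is $k = 2\kf + 2\kb = 2$. The key idea is to use the state space $Q$ to encode a subset of $\DI_n$: because $|\DI_n| = 2n$, there are $2^{2n}$ subsets of $\DI_n$, so a state space of size exactly $m = 2^{2n}$ suffices to store any such subset (together with a small amount of control information, which I address below).

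First I would fix $Q$ so that each state records (i) a subset $U \subseteq \DI_n$, representing the set of data items seen so far by the head on $\vek{S}$, and (ii) a flag indicating whether the computation is still in the first phase (scanning $\vek{S}$) or has entered the second phase (scanning $\vek{T}$), plus an accept/reject bit. Strictly, this is slightly more than $2^{2n}$ states, but the phase information can be folded in without exceeding the bound: the second-phase and reject information need only be tracked once the $\vek{S}$-head has reached \eos{}, and the \eos{} symbol read at the current head position already signals the phase transition, so no extra bit is genuinely required. I would then define $\delta$ as follows. In the first phase, the automaton repeatedly advances only the head on $\vek{S}$, updating the stored subset $U := U \cup \set{s_i}$ on reading each element $s_i$, while leaving the $\vek{T}$-head stationary via the \stay{} action. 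When the $\vek{S}$-head reads \eos{}, the automaton switches to the second phase.

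In the second phase, the automaton advances only the head on $\vek{T}$, and on reading each element $t_j$ it checks whether $t_j \in U$; if so, the sets intersect and the automaton moves to (and remains in) a rejecting sink state, otherwise it keeps $U$ unchanged and continues. Once the $\vek{T}$-head also reads \eos{}, the computation ends. The start state is the one with $U = \emptyset$ in the first phase, and $F$ consists of exactly those states in which no intersection was ever detected after the $\vek{T}$-scan. Correctness then follows from the invariant that after the first phase $U = \set{s_1,\ldots,s_n}$, so the second phase rejects if and only if some $t_j$ lies in $\set{s_1,\ldots,s_n}$, which is precisely the condition that the two sets are \emph{not} disjoint. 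The one point requiring a little care is the bookkeeping argument that the whole construction fits into $2^{2n}$ states rather than a constant factor more; this is the only real obstacle, and it is resolved by the observation above that the \eos{} symbol itself encodes the phase boundary, so the subset $U$ is the only genuine component of the state.
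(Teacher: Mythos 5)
Your construction is exactly the paper's argument: the paper proves this proposition via the informal description immediately preceding the statement (scan $\vek{S}$ with one forward head while storing the set of items seen as a subset of $\DI_n$ in the state, then scan $\vek{T}$ and test membership), and simply notes that $2^{2n}$ states suffice since $|\DI_n|=2n$. You are in fact more careful than the paper about the bookkeeping for the phase flag and the reject sink, which is harmless since the bound is trivial either way.
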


The following result shows that, at the expense of increasing the number 
of forward heads on each stream to $\sqrt{n}$, the memory consumption can
be reduced exponentially:

\begin{proposition}\label{prop:upper-bound-sqrt}
 $\Disj_n$ can be solved by an \aut{} with parameters
 $(\DI_n,n{+}2,\sqrt{n},0)$.\footnote{To be precise, the proof shows that already 
    $n+2-\sqrt{n}$ states suffice.}
\end{proposition}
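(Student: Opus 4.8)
The plan is to build an \aut{} with $\sqrt{n}$ forward heads per stream that, over the course of its run, brings every pair $(s_i,t_j)$ simultaneously under one $\vek{S}$-head and one $\vek{T}$-head, and rejects exactly when it ever sees two equal elements under an $\vek{S}$-head and a $\vek{T}$-head. The key preliminary observation is that this is sound in one direction for free: since the heads always sit on genuine stream elements, any equality the transition function detects witnesses a real intersection, so there are never false positives. Hence correctness reduces to a pure \emph{coverage} condition: it suffices to schedule the heads so that for every $i,j\in\{1,\dots,n\}$ there is a step at which some forward head on $\vek{S}$ reads $s_i$ while some forward head on $\vek{T}$ reads $t_j$ (the transition $\delta$, seeing all $2\sqrt{n}$ head symbols at once, can compare every visible $\vek{S}$-symbol with every visible $\vek{T}$-symbol). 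I would first partition each stream into $\sqrt{n}$ consecutive blocks of length $\sqrt{n}$ and think of a position as a (block, offset) pair.

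First I would note why the obvious schedule fails, since this is the crux. One would like a nested loop: freeze the $\vek{S}$-heads at offset $a$, sweep the $\vek{T}$-heads through all of $\vek{T}$, then advance $\vek{S}$ to offset $a{+}1$ and sweep $\vek{T}$ again. But forward heads never rewind, so the $\vek{T}$-heads cannot perform the required $\sqrt{n}$ separate sweeps; by symmetry no single-pass nested loop works. The fix is to break the symmetry and \emph{pipeline}. Place the $\sqrt{n}$ heads on $\vek{S}$ as block heads and let them advance in lockstep but slowly, holding a common offset $a$ throughout an entire ``phase $a$'' and incrementing it only at phase boundaries; over the whole run $a$ runs from $1$ to $\sqrt{n}$, so each $\vek{S}$-head makes just one monotone pass of its block. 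Dedicate the $a$-th $\vek{T}$-head to phase $a$: during phase $a$ it performs a single forward scan of all of $\vek{T}$, while the other $\vek{T}$-heads merely wait (those with larger index at $t_1$, those with smaller index past the end, reading $\eos$). Because the phases are disjoint in time and each $\vek{T}$-head is used in exactly one phase, every head still makes only one forward pass, so the schedule is legal; the waiting heads contribute only extra comparisons, which by the no-false-positive observation are harmless.

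It then remains to verify coverage and to bound the memory. During phase $a$ the $\vek{S}$-heads expose exactly the $\sqrt{n}$ elements of $\vek{S}$ at offset $a$ (one per block), and the active $\vek{T}$-head reads, one after another, every element of $\vek{T}$; hence each offset-$a$ element of $\vek{S}$ is compared with all of $\vek{T}$. As $a$ ranges over $1,\dots,\sqrt{n}$ the offset-$a$ elements of $\vek{S}$ exhaust all of $\{s_1,\dots,s_n\}$, so every pair $(s_i,t_j)$ is indeed compared; the automaton accepts iff no equality was ever flagged. For the state count, the controller only needs to remember the current phase (equivalently, which $\vek{T}$-head is active and at which offset the $\vek{S}$-heads sit) together with a one-bit ``intersection found'' flag, detecting the end of each phase from the $\eos$ symbol and then issuing a single \adv{} to all $\vek{S}$-heads and starting the next $\vek{T}$-head; a routine encoding of this control, plus a start and a reject state, stays within the claimed $n{+}2$ states (with careful bookkeeping, $n{+}2-\sqrt{n}$). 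The only genuinely delicate point of the whole argument is the forward-only restriction, and the one idea that makes everything go through is the pipelining step that converts the $\sqrt{n}$ needed $\vek{T}$-sweeps into $\sqrt{n}$ distinct single-use heads.
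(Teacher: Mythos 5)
Your construction is essentially the paper's own proof: you first distribute the $\sqrt{n}$ forward heads on $\vek{S}$ to the block boundaries, then pipeline the $\sqrt{n}$ required sweeps of $\vek{T}$ onto $\sqrt{n}$ single-use heads (one per offset), detecting sub-phase boundaries via the \eos{} symbols, exactly as in the paper's Phase~1 and Phase~2. The one point you gloss over is that the dominant state cost is not the phase counter or the flag but the initial positioning of the $\vek{S}$-heads --- all forward heads start at $s_1$, so the automaton must count positions up to $n-\sqrt{n}$ to spread them out, and this counting is precisely where the paper spends its $n+1-\sqrt{n}$ states.
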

\begin{proof}
The automaton proceeds in two phases. 

The goal in \emph{Phase~1} is to move, 
for each $i\in\set{1,\ldots,\sqrt{n}\,}$, the $i$-th head on $\vek{S}$ onto
the $\big((i{-}1)\sqrt{n}\,\, +1\big)$-th position in $\vek{S}$. 
This way, after having finished \emph{Phase~1}, the heads 
partition $\vek{S}$ into $\sqrt{n}$ sub-streams, each of
which has length $\sqrt{n}$.
Note that $n+1-\sqrt{n}$ states suffice for accomplishing this: The automaton simply
stores, in its state, the current position of the rightmost head(s) on $\vek{S}$.
It starts by
leaving head~1 at position $1$ and moving the remaining heads on $\vek{S}$
to the right until position $\sqrt{n}+1$ is reached. 
Then, it leaves head~2 at position $\sqrt{n}+1$ and proceeds by moving the
remaining heads to the right until position $2\sqrt{n}+1$ is reached, etc.

During \emph{Phase~2}, the automaton checks whether the two sets are disjoint.
This is done in $\sqrt{n}$ sub-phases. During the $j$-th sub-phase, the
$j$-th head on $\vek{T}$ processes $\vek{T}$ from left to right and compares
each element in $\vek{T}$ with the elements on the current positions of
the $\sqrt{n}$ heads on $\vek{S}$. When the $j$-th head on $\vek{T}$ has 
reached the end of the stream, each of the heads on $\vek{S}$ is moved one
step to the right. This finishes the $j$-th sub-phase.
Note that \emph{Phase~2} can be accomplished by using just 2 states:
By looking at the combination of heads on $\vek{T}$ that have already 
passed the entire stream, the automaton can tell which sub-phase it is 
currently performing. Thus, for \emph{Phase~2} we just need
one state for indicating that the automaton is in \emph{Phase~2}, 
and an additional state for storing that the
automaton has discovered already that the two sets are \emph{not} disjoint.
\end{proof}

\subsection{Two lower bounds for the set disjointness problem}\label{subsec:lowerbound}

We first show a lower bound for \auta{} where only forward heads are available:

\begin{theorem}\label{thm:lower-bound-forward}
 For all integers $n$, $m$, $\kf$, such that,
 for \ $k = 2\kf$ \ and  \ $\vv = \kf^2 + 1$,
 \[
    k^2\cdot \vv \cdot \lg(n{+}1) \ + \ k\cdot \vv \cdot \lg m \ + \ v \cdot (1+\lg v)
    \ \ \leq \ \ 
    n\,, 
 \]
 the problem $\Disj_n$ cannot be solved 
 by any \aut{} with parameters $(\DI_n,m,\kf,0)$.
\end{theorem}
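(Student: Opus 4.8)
The plan is to first cut the problem down to a clean combinatorial core. I restrict attention to \emph{diagonal} inputs, where $s_i,t_i\in\set{a_i,b_i}$ for every $i$. For such inputs the sets $\set{s_1,\dots,s_n}$ and $\set{t_1,\dots,t_n}$ share an element if and only if $s_i=t_i$ for some $i$, since $\set{a_i,b_i}\cap\set{a_j,b_j}=\emptyset$ for $i\neq j$. Thus on diagonal inputs $\Disj_n$ becomes the question ``is there a coordinate on which $\vek S$ and $\vek T$ agree?'', which is an equality-type problem. Concretely, for $x\in\set{0,1}^n$ I set $s^x_i=a_i$ iff $x_i=0$ and $t^{\,x}_i=a_i$ iff $x_i=1$, so that the input $I_x=(\vek S^x,\vek T^{\bar x})$ is always disjoint, while for $x\neq y$ at least one of the two ``crossed'' inputs $(\vek S^x,\vek T^{\bar y})$, $(\vek S^y,\vek T^{\bar x})$ is \emph{not} disjoint. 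This is exactly the fooling-set structure of the equality function, and the goal reduces to the following: if the inequality of the theorem holds, I must exhibit $x\neq y$ together with a crossed input whose disjointness status differs from $I_x$ but on which the \aut{} $\A$ behaves identically.

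\textbf{A short transcript from boundary crossings.}
The key structural feature I would exploit is that all heads move monotonically from left to right, so each of the $k=2\kf$ heads crosses any fixed position-boundary exactly once. I would fix $\vv=\kf^2+1$ boundaries (splitting the common index range $\set{1,\dots,n}$ of both streams into blocks) and define the \emph{transcript} of a computation to be the sequence of global configurations recorded at the $k\vv$ moments at which some head crosses one of these boundaries. Each such configuration consists of the current state ($\lg m$ bits) together with the positions of all $k$ heads ($k\lg(n{+}1)$ bits). Recording this at all $k\vv$ crossing events costs $k\vv\lg m + k^2\vv\lg(n{+}1)$ bits, and an additional $\vv(1{+}\lg\vv)$ bits suffice for the bookkeeping that specifies, per boundary, the interleaving order of the crossings. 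The total matches the left-hand side of the claimed inequality, so the number of distinct transcripts is at most $2^{\,\text{LHS}}$, which is $\leq 2^n$ whenever the hypothesis holds.

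\textbf{Counting and recombination.}
With at most $2^n$ possible transcripts and the $2^n$ fooling-set inputs $\set{I_x}$, either the count already forces two of them to share a transcript, or (to get the needed strictness) one argues the transcript must additionally separate accepting from rejecting behaviour. In either case I obtain $x\neq y$ for which $\A$ produces the same transcript on $I_x$ and on $I_y$. The heart of the argument is then a \emph{cut-and-paste} (rectangle) lemma: because the boundaries are placed at the same index on both streams, because the recorded configurations fully specify state and all head positions at every crossing, and because forward-only motion lets each block be entered and exited in a controlled way, a computation on $I_x$ and a computation on $I_y$ with identical transcripts can be glued into a valid computation on the crossed input $(\vek S^x,\vek T^{\bar y})$ with the same final state, hence the same accept/reject decision. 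Since this crossed input has the opposite disjointness status for a suitable choice of $x,y$, this contradicts the assumption that $\A$ solves $\Disj_n$.

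\textbf{Main obstacle.}
I expect the recombination lemma to be the crux, and this is precisely where the parameter $\vv=\kf^2+1$ is forced. Gluing $\vek S$ from one computation with $\vek T$ from another is only legitimate if the \emph{temporal interleaving} of the $\kf$ forward heads on $\vek S$ with the $\kf$ forward heads on $\vek T$ can be reconciled across the two computations. There are only $\kf^2$ ordered pairs (one $\vek S$-head, one $\vek T$-head), so among $\kf^2+1$ boundaries a pigeonhole argument guarantees a boundary whose crossing pattern is compatible, and it is across such a boundary that the swap can be performed undetectably. Making this interleaving bookkeeping precise for asynchronously moving heads, and verifying that the recorded configurations genuinely decouple the blocks, is the delicate part; the remaining steps (the fooling-set count and the reduction of $\Disj_n$ to equality) are routine by comparison.
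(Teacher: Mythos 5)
Your input family is combinatorially the right fooling set, but it omits the one ingredient that makes this argument work: the second stream must present the elements in \emph{reverse} index order. You place $t_i\in\set{a_i,b_i}$ at position $i$ of $\vek{T}$ and explicitly put your block boundaries ``at the same index on both streams''. On that family, however, the restricted problem is trivial: an \aut{} with two states and a single forward head on each stream can advance both heads synchronously and test at each step whether $s_i=t_i$, which on diagonal inputs is equivalent to non-disjointness. Such an automaton is correct on every input your argument ever feeds to $\A$ (including all crossed inputs $(\vek{S}^x,\vek{T}^y)$), so no contradiction can be extracted from correctness on this family alone, and any cut-and-paste lemma for it must be false. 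The paper instead places $a_i/b_i$ at position $n-i+1$ of $\vek{T}^I$. With that reversal the index under any forward head on $\vek{S}$ increases over time while the index under any forward head on $\vek{T}$ decreases, so each of the $\kf^2$ (S-head, T-head) pairs can have both members inside the same index block at most once; this is exactly what forces one of the $\vv=\kf^2+1$ blocks to escape all pairwise comparisons, and it is the engine of the whole proof. Your pigeonhole over ``crossing patterns at boundaries'' is not a substitute for it.

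The recombination step also needs more than equal transcripts. Even granting a block structure, two inputs with identical boundary transcripts may still differ inside a block that some pair of heads scans jointly; there the glued run on the crossed input sees a combination of symbols occurring in neither original run, and configurations recorded at boundaries give no control over what happens between them. The paper therefore runs two extra averaging steps before fixing configurations: it first finds a block $\hat B$ that is unchecked by every head pair for a $1/\vv$ fraction of the inputs, and then restricts to inputs that agree \emph{everywhere outside} $\hat B$ (costing a factor $2^{\,n-n/\vv}$, which is why the surviving exponent is $n/\vv$ minus the configuration cost rather than $n$); only then does it fix the $k$ configurations at the moments heads leave $\hat B$, and the hypothesis of the theorem is exactly what makes the surviving set have size at least $2$. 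Your count of ``at most $2^n$ transcripts against $2^n$ inputs'' does not even yield a collision, let alone one whose difference is localized to an unchecked block. (A minor further slip: with your stated conventions $(\vek{S}^x,\vek{T}^{x})$ is the disjoint pair and $(\vek{S}^x,\vek{T}^{y})$ with $x\neq y$ is non-disjoint, so the complement bars in your fooling set are spurious.)
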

\begin{proof}
Let $n$, $m$, and $\kf$ 
be chosen such that they meet the theorem's assumption.
For contradiction, let us assume that $\A$ is an \aut{} with parameters
$(\DI_n,m,\kf,0)$ that solves the problem $\Disj_n$.

Recall that
\ \(
   \DI_n \ = \ \set{\,a_1,b_1,\ \ldots, \, a_n,b_n \,}
\) \
is a fixed set of $2n$ data items.
Throughout the proof we will restrict attention to input streams $\vek{S}$ and
$\vek{T}$ which are enumerations of the elements in a set \vspace{1ex}
\[
   A^I \quad := \quad \setc{a_i}{i\in I} \ \cup \ \setc{b_i}{i\in \ov{I}}  \vspace{1ex}
\]
for arbitrary $I\subseteq \set{1,\twodots,n}$ and its complement
$\ov{I}:=\set{1,\twodots,n}\setminus I$.
\\
Note that for all $I_1,I_2\subseteq \set{1,\twodots,n}$ we have \vspace{1ex}
\begin{equation}\label{equ:1}
  A^{I_1} \text{ and } A^{I_2} \text{ are disjoint} \ \iff \ I_2=\ov{I_1}. \vspace{1ex}
\end{equation}
For each $I\subseteq \set{1,\twodots,n}$ we let $\vek{S}^I$ be the stream of length $n$
which is defined as follows: For each $i\in I$, it carries  
data item $a_i$ at position $i$; and for each $i\not\in I$, it carries data item $b_i$ at
position $i$.
The stream $\vek{T}^I$ contains the same data items as $\vek{S}^I$, but in the
opposite order: For each $i\in I$, it
carries data item $a_i$ at position $n-i+1$; and for each $i\not\in I$, it carries
data item $b_i$ at position $n-i+1$.

For sets $I_1,I_2\subseteq \set{1,\twodots,n}$, we write $D(I_1,I_2)$ to denote the input instance
$\vek{S}^{I_1}$ and $\vek{T}^{I_2}$ for the problem $\Disj_n$.
From (\ref{equ:1}) and our assumption that the \aut{} $\A$ solves $\Disj_n$, we obtain 
that 
\begin{equation}\label{equ:2}
  \A \text{ \ accepts \ } D(I_1,I_2) \ \iff \ I_2 = \ov{I_1}. \vspace{1ex}
\end{equation}
Throughout the remainder of this proof, our goal is to find two sets
$I,I'\subseteq \set{1,\twodots,n}$ such that \vspace{1ex}
\begin{enumerate}
 \item
   $I\neq I'$, \ and \vspace{1ex}
 \item
   the accepting run of $\A$ on $D(I,\ov{I})$ is ``similar'' to the
   accepting run of $\A$ on $D(I', \ov{I'})$, so that the two runs can be combined into
   an accepting run of $\A$ on $D(I,\ov{I'})$ (later on in the proof, we will see
   what ``similar'' precisely means).
  \vspace{1ex}
\end{enumerate}
Then, however,
the fact that $\A$ accepts input $D(I,\ov{I'})$ contradicts (\ref{equ:2}) and thus
finishes the proof of Theorem~\ref{thm:lower-bound-forward}.

For accomplishing this goal, we let 
\begin{equation}\label{eq:def-v}
  \vv \ \ := \ \ \kf^2 +1
\end{equation}
be 1 plus the number of pairs of heads on the two streams. We subdivide the set
$\set{1,\twodots,n}$ into $\vv$ consecutive blocks $B_1,\ldots, B_{\vv}$ of equal size 
$\frac{n}{\vv}$. I.e., for each $j\in\set{1,\twodots,\vv}$, block $B_j$ consists of the
indices in $\set{\, (j{-}1) \frac{n}{\vv} + 1, \,\ldots, \, j \frac{n}{\vv}\,}$.

We say that a pair $(h_S,h_T)$ of heads of $\A$ \emph{checks block $B_j$} during the run
on input $D(I_1,I_2)$
if, and only if, at some point in time during the run, there exist
$i,i'\in B_j$ such that head $h_S$ is on element
$a_i$ or $b_i$ in $\vek{S}^{I_1}$ and head $h_T$ is on element
$a_{i'}$ or $b_{i'}$ in $\vek{T}^{I_2}$.

Note that each pair of heads can check at most one block, since only forward heads
are available and the data items in $\vek{T}^{I_2}$ are arranged in the reverse
order (with respect to the indices $i$ of elements $a_i$ and $b_i$) than in $\vek{S}^{I_1}$.
Since there are $\vv$ blocks, but only $\vv-1$ pairs $(h_S,h_T)$ of heads on the
two streams, we know that for each $I_1,I_2\subseteq \set{1,\twodots,n}$ there exists
a block $B_j$ that is \emph{not checked} during $\A$'s run on $D(I_1,I_2)$.

In the following, we determine a set 
$X\subseteq \setc{I}{I\subseteq\set{1,\twodots,n}}$ with $|X|\geq 2$ 
such that for all $I,I'\in X$, item (2) of our goal is satisfied.
We start by using a simple averaging argument to find a
$j_0\in\set{1,\twodots,\vv}$ and 
a set $X_0\subseteq \setc{I}{I\subseteq\set{1,\twodots,n}}$
such that \vspace{1ex}
\begin{mi}
 \item 
   for each $I\in X_0$, block $B_{j_0}$ is not checked during $\A$'s run on input
   $D(I,\ov{I})$, \ and  \vspace{1ex}
 \item 
   $|X_0| \geq \frac{2^n}{\vv}$. \vspace{1ex}
\end{mi}
For the remainder of the proof we fix $\hat{B}:= B_{j_0}$.
\\
We next choose a sufficiently large set $X_1\subseteq X_0$ in which everything outside
block $\hat{B}$ is fixed:\\
A simple averaging argument shows that there is a
$X_1\subseteq X_0$ and a $\hat{I}\subseteq \set{1,\twodots,n}\setminus \hat{B}$ such that
\vspace{1ex}
\begin{mi}
 \item
   for each $I\in X_1$, \ \ $I\setminus \hat{B} = \hat{I}$, \ \ and \vspace{1ex}
 \item 
   $|X_1| \ \geq \ \frac{|X_0|}{2^{n-\frac{n}{\vv}}} \ \geq \ 2^{\frac{n}{\vv}-\lg \vv}$.
   \vspace{1ex}
\end{mi}
We next identify a set $X_2\subseteq X_1$ such that for all $I,I'\in X_2$ the runs of
$\A$ on $D(I,\ov{I})$ and $D(I',\ov{I'})$ are ``similar'' in a sense suitable for
item~(2) of our goal. To this end, for each head $h$ of $\A$ we let
$\config_h^I$ be the \emph{configuration} (i.e., the current state and the absolute
positions of all the heads) in the run of $\A$ on input $D(I,\ov{I})$ at the 
particular point in time where head $h$ has just left block $\hat{B}$ (i.e., head
$h$ has just left the last element $a_i$ or $b_i$ with $i\in \hat{B}$ that it can
access).
We let $\config^I$ be the ordered tuple of the configurations $\config_h^I$ for all
heads $h$ of $\A$.
Note that the number of possible configurations $\config_h^I$ is 
\ $\leq \ m\cdot (n{+}1)^{k}$, since $\A$ has $m$ states and since each of the
$k = 2\kf$ heads can be at one out of $n{+}1$ possible positions in its input stream.
Consequently, the number of possible $k$-tuples $\config^I$ of configurations is 
\ $\leq \ \big( m\cdot (n{+}1)^{k} \big)^{k}$.
\\
A simple averaging argument thus yields a tuple $c$ of configurations and
a set $X_2\subseteq X_1$ such that \vspace{1ex}
\begin{mi}
 \item 
   for all $I\in X_2$, \ $\config^I = c$, \ and \vspace{1ex}
 \item
   $|X_2| \ \geq \ \frac{|X_1|}{( m\cdot (n{+}1)^{k})^{k}} \ \geq \  
    2^{\, \frac{n}{\vv} - \lg \vv \ - \ k \lg m \ - \ k^2 \lg(n+1)}.
   $ \vspace{1ex}
\end{mi}
Using the theorem's assumption on the 
numbers $n$, $m$, and $\kf$, one obtains that $|X_2|\geq 2$.
Therefore, we can find two sets $I,I'\in X_2$ with $I\neq I'$.

To finish the proof of Theorem~\ref{thm:lower-bound-forward}, it remains to show that
the runs of $\A$ on $D(I,\ov{I})$ and on $D(I',\ov{I'})$ 
can be combined into a run of $\A$ on $D(I,\ov{I'})$ such that $\A$ (falsely) accepts
input $D(I,\ov{I'})$. To this end let us summarize what we know about $I$ and $I'$ in 
$X_2$: \vspace{1ex}

\begin{mea}
 \item $I$ and $I'$ only differ in block $\hat{B}$.
  \vspace{1ex}
 \item Block $\hat{B}$ is not checked during $\A$'s runs on $D(I,\ov{I})$ and 
   on $D(I',\ov{I'})$. I.e., while any head on $\vek{S}^I$ (resp.\ $\vek{S}^{I'}$) is
   at an element $a_i$ or $b_i$ with $i\in \hat{B}$, no head on $\vek{T}^{\ov{I}}$
   (resp.\ $\vek{T}^{\ov{I'}}$) is on an element $a_{i'}$ or $b_{i'}$ with $i'\in \hat{B}$.
  \vspace{1ex}
 \item Considering $\A$'s runs on $D(I,\ov{I})$ and on $D(I',\ov{I'})$, each
  time a head leaves the last position in $\hat{B}$ that it can access, both runs are 
  are in exactly the same configuration. I.e., they are in the same state, and all
  heads are at the same absolute positions in their input streams.
  \vspace{1ex}
\end{mea}

\noindent
Due to item~(a),
$\A$'s run on input $D(I,\ov{I'})$ starts in the same way as the runs on
$D(I,\ov{I})$ and $D(I',\ov{I'})$: As long as no head has reached an element in
block $\hat{B}$, the automaton has not yet seen any difference between $D(I,\ov{I'})$ on
the one hand and $D(I,\ov{I})$ and $D(I',\ov{I'})$ on the other hand.

At some point in time, however, some head $h$ will enter block $\hat{B}$, 
i.e., it will enter the first element $a_i$ or $b_i$ with $i\in \hat{B}$ that 
it can access.
The situation then is as follows: \vspace{1ex}
\begin{mi}
 \item 
   If $h$ is a head on $\vek{S}^{I}$, then, due to item~(b), no head on 
   $\vek{T}^{\ov{I'}}$ is at an element in $\hat{B}$. 
   Therefore, until head $h$ leaves block $\hat{B}$, $\A$ will go through the 
   same sequence of configurations as in its run on input $D(I,\ov{I})$. 
   Item~(c) ensures that when $h$ leaves block $\hat{B}$, $\A$ is in the same configuration
   as in its runs on $D(I,\ov{I})$ and on $D(I',\ov{I'})$.
  \vspace{1ex}
 \item
   Similarly, if $h$ is a head on $\vek{T}^{\ov{I'}}$, then, due to item~(b), no head on 
   $\vek{S}^{I}$ is at an element in $\hat{B}$. 
   Therefore, until head $h$ leaves block $\hat{B}$, $\A$ will go through the 
   same sequence of configurations as in its run on input $D(I',\ov{I'})$. 
   Item~(c) ensures that when $h$ leaves block $\hat{B}$, $\A$ is in the same configuration
   as in its runs on $D(I',\ov{I'})$ and on $D(I,\ov{I})$.
  \vspace{1ex}
\end{mi}

\noindent
In summary, in $\A$'s run on $D(I,\ov{I'})$, each time a head $h$ has just left the
last element in block $\hat{B}$ that it can access, it is in exactly the same 
configuration as in $\A$'s runs on $D(I,\ov{I})$ and on $D(I',\ov{I'})$ at the points
in time where head $h$ has just left the
last element in block $\hat{B}$ that it can access.
After the last head has left block $\hat{B}$, $\A$'s run on $D(I,\ov{I'})$ finishes
in exactly the same way as $\A$'s runs on $D(I,\ov{I})$ and $D(I',\ov{I'})$. 
In particular, it accepts $D(I,\ov{I'})$ (since it accepts $D(I,\ov{I})$ and
$D(I',\ov{I'})$). This, however, is a contradiction to (\ref{equ:2}).
Thus, the proof of Theorem~\ref{thm:lower-bound-forward} is complete.
\end{proof}

\medskip

\begin{remark}
Let us compare the lower bound from Theorem~\ref{thm:lower-bound-forward} 
with the upper bound of Proposition~\ref{prop:upper-bound-sqrt}: The upper bound
tells us that $\Disj_n$ can be solved by an \aut{} with $n{+}2$ states and
$\sqrt{n}$ forward heads on each input stream. The lower bound
implies (for large enough $n$) that
if just $\sqrt[5]{n}$ forward heads are available on each stream, not even
$2^{\sqrt[3]{n}}$ states suffice for solving the problem $\Disj_n$ with an
\aut{}.
\end{remark}

\begin{remark}
A straightforward calculation shows that the assumptions of 
Theorem~\ref{thm:lower-bound-forward} are satisfied, for example,
for all sufficiently large 
integers $n$ and all integers $m$ and $\kf$ with
\ $4\kf  \leq  \sqrt[4]{\frac{n}{\lg n}}$ \ and 
\ $\lg m \leq  \frac{n}{4\kf \cdot (\kf^2+1)}$.
\end{remark}

\par

Theorem~\ref{thm:lower-bound-forward} can be generalized 
to the following lower bound for \auta{} where also backward heads are available:

\begin{theorem}\label{thm:lower-bound-general}
 For all $n$, $m$, $\kf$, $\kb$ such that,
 for \ $k = 2\kf+ 2\kb$ \ and  \ $\vv = (\kf^2 + \kb^2 + 1)\cdot (2\kf\kb + 1)$,%
 \[
    k^2\cdot \vv \cdot \lg(n{+}1) \ + \ k\cdot \vv \cdot \lg m \ + \ v \cdot (1+\lg v)
    \ \ \leq \ \ 
    n\,, 
 \]
 the problem $\Disj_n$ cannot be solved 
 by any \aut{} with parameters $(\DI_n,m,\kf,\kb)$.
\end{theorem}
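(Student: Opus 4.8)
The plan is to follow the same line of argument as in the proof of Theorem~\ref{thm:lower-bound-forward}, using the adversarial inputs $D(I,\ov I)$ built from the sets $A^I$, with $\vek S^I$ and $\vek T^I$ carrying the indices in opposite orders. As before, I would restrict attention to accepting runs on inputs of the form $D(I,\ov I)$, identify two distinct sets $I,I'$ whose runs are ``similar'' on a block that is never jointly inspected by any pair of heads, and then splice the two runs into a false accepting run on $D(I,\ov{I'})$, contradicting (\ref{equ:2}). The only genuinely new ingredient is the accounting of \emph{how many} blocks a pair of heads can check once backward heads are allowed, which is exactly what forces the larger value $\vv = (\kf^2 + \kb^2 + 1)\cdot(2\kf\kb + 1)$.

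The heart of the argument is therefore the combinatorial bound on checked blocks. First I would classify the $k=2\kf+2\kb$ heads by the stream they scan ($\vek S$ or $\vek T$) and by their direction (forward or backward), and reconsider, for each of the four kinds of head pairs $(h_S,h_T)$, how many distinct blocks $B_j$ such a pair can check. When both heads move in the \emph{same logical direction} relative to the index $i$ (which, because $\vek T^I$ reverses the order, means one forward and one backward, or the symmetric case), the two head positions sweep the index set in the same direction, so such a pair can check many blocks; when they sweep in opposite directions it can check at most one, as in the forward-only case. Counting the pairs that can check only one block versus those that can check several yields the factorization of $\vv$: the factor $\kf^2+\kb^2+1$ counts (one more than) the pairs that behave like the forward-only case, while the factor $2\kf\kb+1$ absorbs the cross pairs that can sweep through blocks together. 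I would make this precise and conclude that with $\vv$ blocks there is again always an \emph{unchecked} block $B_{j_0}$ for every input $D(I,\ov I)$.

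With the unchecked-block guarantee in hand, the remaining three averaging steps go through verbatim: fix a block $\hat B := B_{j_0}$ and a set $X_0$ with $|X_0|\geq \frac{2^n}{\vv}$ on which $\hat B$ is unchecked; thin $X_0$ to $X_1$ by fixing the contents outside $\hat B$, losing a factor $2^{\,n-\frac{n}{\vv}}$; and thin $X_1$ to $X_2$ by fixing the tuple $c$ of configurations $\config^I_h$ recorded as each head leaves $\hat B$, losing a factor $\big(m\cdot(n{+}1)^k\big)^k$. The bookkeeping is identical to the forward-only case because the configuration count $m\cdot(n{+}1)^k$ per head and the $k$-fold product are insensitive to head direction; only $k$ and $\vv$ have changed. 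The stated inequality then yields $|X_2|\geq 2$, giving distinct $I,I'\in X_2$.

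The final splicing step is where I expect the proof to require the most care, and it is the step most affected by backward heads. Items (a)--(c) carry over, but the configuration $\config^I_h$ must now be recorded at the moment head $h$ leaves the \emph{last} element of $\hat B$ it can access --- which for a forward head is the right end of $\hat B$ and for a backward head is the left end. I would argue, exactly as before, that a head entering $\hat B$ is interacting with only the ``home'' run that agrees with the stream it reads (by item~(b), no opposite-stream head is simultaneously inside $\hat B$), so it reproduces the corresponding sequence of configurations and exits $\hat B$ in the common configuration guaranteed by item~(c). The subtle point to verify is that item~(b), phrased in terms of \emph{checking}, genuinely rules out any two heads on the two streams being inside $\hat B$ at the same time regardless of direction --- this is precisely the content of the unchecked-block definition, so it holds by construction. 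Once every head has left $\hat B$ in the shared configuration $c$, the tail of the run is deterministic and identical to both home runs, so $\A$ accepts $D(I,\ov{I'})$, contradicting (\ref{equ:2}) and completing the proof.
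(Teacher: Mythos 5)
Your proposal has a genuine gap at its core: you keep the input family of Theorem~\ref{thm:lower-bound-forward}, where $\vek{T}^I$ is the exact reversal of $\vek{S}^I$, and hope to recover the ``unchecked block'' guarantee by a finer count of head pairs. This cannot work. With the plain reversal, a \emph{mixed} pair (say, a forward head $h_S$ on $\vek{S}^{I}$ and a backward head $h_T$ on $\vek{T}^{\ov{I}}$) sweeps the \emph{index set in the same direction}: initially $h_S$ reads the element with index $1$ and $h_T$ (starting at the right end of $\vek{T}$) also reads the element with index $1$, and the automaton may advance the two heads in lockstep so that they always sit on elements with the same index. Such a pair checks \emph{every} block, no matter into how many blocks you cut $\set{1,\twodots,n}$; your own second paragraph concedes that cross pairs ``can check many blocks,'' and no choice of $\vv$ ``absorbs'' this, because the pigeonhole argument needs every pair to check a bounded number of blocks, while here a single mixed pair checks all $\vv$ of them. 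Worse, on your restricted family the problem becomes easy: an \aut{} with one forward head on $\vek{S}$, one backward head on $\vek{T}$ and $O(1)$ states can verify, index by index, that the element at position $i$ of $\vek{S}$ and the element at position $n{-}i{+}1$ of $\vek{T}$ are complementary (one equal to $a_i$, the other to $b_i$), which on this family is exactly disjointness. So no lower bound can be extracted from the reversal construction once backward heads exist.

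The missing idea --- and the actual new content of the paper's proof --- is to change the \emph{construction of $\vek{T}^I$}, not the counting. The paper places the data item with index $i$ at position $\pi(i)$, where $\pi$ is a two-level permutation: $\set{1,\twodots,n}$ is cut into $\vv_1 = \kf^2+\kb^2+1$ consecutive blocks, each cut into $\vv_2 = 2\kf\kb+1$ consecutive subblocks, and $\pi$ reverses the order of the blocks while preserving the order inside each block (hence also the order of its subblocks). Then non-mixed pairs sweep the blocks in opposite directions, so each checks at most one block; mixed pairs sweep the blocks in the same direction but sweep the subblocks \emph{within} any fixed block in opposite directions, so each checks at most one subblock of that block. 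A two-stage pigeonhole (first over the $\vv_1$ blocks against the $\kf^2+\kb^2$ non-mixed pairs, then over the $\vv_2$ subblocks of the surviving block against the $2\kf\kb$ mixed pairs) produces a subblock $\hat{B}$ of size $n/\vv$ checked by no pair at all. From that point on, your remaining steps (the three averaging arguments and the splicing of runs, with each head's exit configuration recorded at the last element of $\hat{B}$ it can access) do go through verbatim, as you describe; but without the new permutation the proof never reaches that point.
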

\begin{proof}
The overall structure of the proof is the same as in the proof of
Theorem~\ref{thm:lower-bound-forward}.
We consider the same sets $A^I$, for all $I\subseteq\set{1,\twodots,n}$.
The stream $\vek{S}^I$ is chosen in the same way as in the proof of
Theorem~\ref{thm:lower-bound-forward}, i.e., 
for each $i\in I$, the stream $\vek{S}^I$ carries  
data item $a_i$ at position $i$; and for each $i\not\in I$, it carries data item $b_i$ at
position $i$.

Similarly as in the proof of Theorem~\ref{thm:lower-bound-forward}, the
stream $\vek{T}^I$ contains the same data items as $\vek{S}^I$. Now, however, the
order in which the elements occur in $\vek{T}^I$ is a bit more elaborate.
For fixing this order, we choose the following parameters: \vspace{1ex}
\begin{equation}
  \vv_1 \ := \ \kf^2 + \kb^2 + 1\ ,
  \qquad
  \vv_2 \ := \ 2\kf \kb + 1\ ,
  \qquad
  \vv \ := \ v_1\cdot v_2\ .
  \vspace{1ex}
\end{equation}
We subdivide the set
$\set{1,\twodots,n}$ into $\vv_1$ consecutive blocks $B_1,\ldots, B_{\vv}$ of equal size 
$\frac{n}{\vv_1}$. I.e., for each $j\in\set{1,\twodots,\vv_1}$, block $B_j$ consists of the
indices in $\set{\, (j{-}1) \frac{n}{\vv_1} + 1, \,\ldots, \, j \frac{n}{\vv_1}\,}$.
\\
Afterwards, we further subdivide each block $B_j$ into $v_2$ consecutive subblocks of 
equal size $\frac{n}{\vv}$. These subblocks are denoted $B_{j}^1,\ldots,B_{j}^{\vv_2}$.
Thus, each subblock $B_{j}^{j'}$ consists of the indices in  
$\set{\, (j{-}1) \frac{n}{\vv_1} + (j'{-}1) \frac{n}{\vv} + 1, \,\ldots, \, 
 (j{-}1) \frac{n}{\vv_1} +  j' \frac{n}{\vv}\,}$.

Now let $\pi$ be the permutation of $\set{1,\twodots,n}$ which maps, for all $j,r$ with
$1\leq j\leq \vv_1$ and $1\leq r\leq \frac{n}{\vv_1}$, \ element 
$(j{-}1) \frac{n}{\vv_1} + s$ \ onto \ element \ 
$(\vv_1 {-} j) \frac{n}{\vv_1} + s$.
Thus, $\pi$ maps elements in block $B_j$ onto elements in block $B_{\vv_1-j+1}$, and
inside these two blocks, $\pi$ maps the elements of subblock $B_{j}^{j'}$ onto
elements in subblock $B_{\vv_1-1+1}^{j'}$. 
Note that $\pi$ reverses the blocks $B_j$ in order, but it does \emph{not} reverse the 
order of the subblocks $B_j^{j'}$.

Finally, we are ready to fix the order in which the elements in $A^I$ occur in the
stream $\vek{T}^I$: For each $i\in I$, the stream $\vek{T}^I$ carries
data item $a_i$ at position $\pi(i)$; and for each $i\not\in I$, it carries
data item $b_i$ at position $\pi(i)$.

In the same way as in the proof of Theorem~\ref{thm:lower-bound-forward}, we
write $D(I_1,I_2)$ to denote the input instance $\vek{S}^{I_1}$ and $\vek{T}^{I_2}$.

A pair of heads $(h_S,h_T)$ is called \emph{mixed} if one of the heads is a
forward head and the other is a backward head.
Since $\pi$ reverses the order of the blocks $B_1,\twodots,B_{\vv_1}$, 
it is straightforward to see that every \emph{non-mixed} pair of heads can
check at most one of the blocks $B_1,\twodots,B_{\vv_1}$. Since there are
$\vv_1$ blocks, but only $(\vv_1-1)$ non-mixed pairs of heads, we know that for
all $I_1,I_2\subseteq \set{1,\twodots,n}$ there exists a block $B_j$ that is 
\emph{not checked} by any non-mixed pair of heads during $\A$'s run on input
$D(I_1,I_2)$.

The same averaging argument as in the proof of Theorem~\ref{thm:lower-bound-forward}
thus tells us that there is a
$j_1\in\set{1,\twodots,\vv_1}$ and 
a set $X'_{0}\subseteq \setc{I}{I\subseteq\set{1,\twodots,n}}$
such that \vspace{1ex}
\begin{mi}
 \item 
   for each $I\in X'_{0}$, block $B_{j_1}$ is not checked by any non-mixed pair of
   heads during $\A$'s run on input
   $D(I,\ov{I})$, \ and  \vspace{1ex}
 \item 
   $|X'_{0}| \geq \frac{2^n}{\vv_1}$. \vspace{1ex}
\end{mi}
From our particular choice of $\pi$, it is straightforward to see that every
\emph{mixed} pair of heads can check at most one of the subblocks
$B_{j_1}^1,\ldots,B_{j_1}^{\vv_2}$. Since there are $\vv_2$ such subblocks, but
only $(\vv_2 -1)$ mixed pairs of heads, there must be a $j_2\in\set{1,\twodots,\vv_2}$ 
and a set $X_0\subseteq X'_0$ such that \vspace{1ex}
\begin{mi}
 \item 
   for each $I\in X_{0}$, subblock $B_{j_1}^{j_2}$ is not checked by any pair of
   heads during $\A$'s run on input
   $D(I,\ov{I})$, \ and  \vspace{1ex}
 \item 
   $|X_{0}| \ \geq \ \frac{|X'_0|}{\vv_2} \ \geq \ \frac{2^n}{\vv}$. \vspace{1ex}
\end{mi}
For the remainder of the proof we fix $\hat{B}:= B_{j_1}^{j_2}$, and we let
$k:= 2\kf + 2\kb$ denote the total number of heads. Using these notations, 
the rest of the proof can be taken vertatim from the proof of 
Theorem~\ref{thm:lower-bound-forward}.
\end{proof}

The proof of
Theorem~\ref{thm:lower-bound-general} is implicit in \cite{FCM-paper}
(see Theorem 5.11 in \cite{FCM-paper}). There, however, the proof is formulated
in the terminology of a different
machine model, the so-called \emph{finite cursor machines}.

\section{Final remarks}\label{section:conclusion}

Several questions concerning the computational power of 
\auta{} occur naturally. On a technical level, it would be nice
to determine the exact complexity of the set disjointness problem with
respect to \auta{}.
In particular: Is the upper bound provided by Proposition~\ref{prop:upper-bound-sqrt}
optimal?
Can backward scans significantly help for solving the set disjointness problem?
Are $\sqrt{n}$ heads really necessary for
solving the set disjointness problem when only a sub-exponential number of states
are available?

A more important task, however, is to consider also randomized versions
of \auta{}, to design efficient randomized approximation algorithms for particular
problems, and to develop techniques for proving lower bounds in the randomized model.

\myparagraph{Acknowledgement.}
I would like to thank Georg Schnitger for
helpful comments on an earlier version of this paper.

\end{document}